\theoremstyle{plain}
\newtheorem{thm}{Theorem}
\newtheorem{prop}[thm]{Proposition}
\theoremstyle{definition}
\newtheorem{defn}[thm]{Definition}
\newtheorem{exmp}[thm]{Example}
\newcommand{\Acts}{\ensuremath{\mathalpha{\Act{}_{\silent}}}}
\newcommand{\act}[1][\alpha]{\ensuremath{\mathalpha{#1}}}
  \newcommand{\acta}[1][]{\ensuremath{\act[a_{#1}]}}
\newcommand{\silent}{\act[\tau]}
  \newcommand{\States}[1][S]{\ensuremath{\mathalpha{#1}}}
  \newcommand{\state}[1]{\ensuremath{\mathalpha{#1}}}
    \newcommand{\states}[1][]{\state{s_{#1}}}
    \newcommand{\statet}[1][]{\state{t_{#1}}}
  \newcommand{\brelsym}[1][\mathcal{R}]{\ensuremath{#1}}
  \newcommand{\brel}[1][{\brelsym}]{\ensuremath{\mathrel{#1}}}
\newcommand{\opt}[1]{\mbox{\tiny\rm(}#1\mbox{\tiny\rm)}} 
\newcommand{\plat}[1]{\raisebox{0pt}[0pt][0pt]{#1}}      
   \newdimen\boxwdplusemdimen
   \def\arrow#1{{
     \boxwdplusemdimen=1em%
     \setbox0=\hbox{$\scriptstyle#1$}%
     \advance\boxwdplusemdimen by \wd0\relax%
     \ifdim\boxwdplusemdimen<16.11119pt%
       \boxwdplusemdimen=16.11119pt%
     \fi%
     \buildrel{#1}\over%
       {\setbox1=\hbox to \boxwdplusemdimen{\rightarrowfill}%
     \ht1=0.3em\relax\box1}%
   }}
   \def\twoheadrightarrowfill{$\m@th\smash-\mkern-7mu%
     \cleaders\hbox{$\mkern-2mu\smash-\mkern-2mu$}\hfill
     \mkern-7mu\mathord\twoheadrightarrow$}
   \def\darrow#1{{
     \boxwdplusemdimen=1em%
     \setbox0=\hbox{$\scriptstyle#1$}%
     \advance\boxwdplusemdimen by \wd0\relax%
     \ifdim\boxwdplusemdimen<16.11119pt%
       \boxwdplusemdimen=16.11119pt%
     \fi%
     \buildrel{#1}\over%
       {\setbox1=\hbox to \boxwdplusemdimen{\twoheadrightarrowfill}%
     \ht1=0.3em\relax\box1}%
   }}
   \def\plarrow#1{{
     \boxwdplusemdimen=1em%
     \setbox0=\hbox{$\scriptstyle#1$}%
     \advance\boxwdplusemdimen by \wd0\relax%
     \ifdim\boxwdplusemdimen<16.11119pt%
       \boxwdplusemdimen=16.11119pt%
     \fi%
     \buildrel{#1}\over%
       {\setbox1=\hbox to \boxwdplusemdimen{\rightarrowfill}%
     \ht1=0.3em\relax\box1^{\scriptstyle +}}%
   }}
  \newcommand{\stepsym}{\ensuremath{\mathalpha{\longrightarrow}}}
  \newcommand{\ssteps}{\ensuremath{\mathbin{\darrow{}}}}
  \newcommand{\sstepssym}{\ensuremath{\mathbin{\twoheadrightarrow}}}
  \newcommand{\plusstepssym}{\ensuremath{\mathbin{\rightarrow^{+}}}}
  \newcommand{\plussteps}{\ensuremath{\mathbin{\plarrow{}}}}
  \newcommand{\wbbisimd}{%
    \setbox0=\hbox{\kern-.1ex{$\leftrightarrow$}\kern-.1ex}
    \setbox1=\vbox{\hbox{\raise .1ex \box0}\hrule}%
    \ensuremath{\mathrel{\hbox{\kern.1ex\box1\kern.1ex}^{\Delta_3}_{b}}}
  }
  \newcommand{\wbbisimdtwo}{%
    \setbox0=\hbox{\kern-.1ex{$\leftrightarrow$}\kern-.1ex}
    \setbox1=\vbox{\hbox{\raise .1ex \box0}\hrule}%
    \ensuremath{\mathrel{\hbox{\kern.1ex\box1\kern.1ex}^{\Delta_2}_{b}}}
  }
  \newcommand{\onestepbbisimd}{%
    \setbox0=\hbox{\kern-.1ex{$\leftrightarrow$}\kern-.1ex}
    \setbox1=\vbox{\hbox{\raise .1ex \box0}\hrule}%
    \ensuremath{\mathrel{\hbox{\kern.1ex\box1\kern.1ex}^{\Delta_1}_{b}}}
  }
  \newcommand{\bbisimd}{%
    \setbox0=\hbox{\kern-.1ex{$\leftrightarrow$}\kern-.1ex}
    \setbox1=\vbox{\hbox{\raise .1ex \box0}\hrule}%
    \ensuremath{\mathrel{\hbox{\kern.1ex\box1\kern.1ex}^{\Delta}_{b}}}
  }
  \newcommand{\bbisimdX}{%
    \setbox0=\hbox{\kern-.1ex{$\leftrightarrow$}\kern-.1ex}
    \setbox1=\vbox{\hbox{\raise .1ex \box0}\hrule}%
    \ensuremath{\mathrel{\hbox{\kern.1ex\box1\kern.1ex}^{\Delta}_{bX}}}
  }
  \newcommand{\rbbisimd}{%
    \setbox0=\hbox{\kern-.1ex{$\leftrightarrow$}\kern-.1ex}
    \setbox1=\vbox{\hbox{\raise .1ex \box0}\hrule}%
    \ensuremath{\mathrel{\hbox{\kern.1ex\box1\kern.1ex}^{\Delta}_{rb}}}
  }
  \newcommand{\rbbisimdX}{%
    \setbox0=\hbox{\kern-.1ex{$\leftrightarrow$}\kern-.1ex}
    \setbox1=\vbox{\hbox{\raise .1ex \box0}\hrule}%
    \ensuremath{\mathrel{\hbox{\kern.1ex\box1\kern.1ex}^{\Delta}_{rbX}}}
  }
  \newcommand{\rbbisimdzero}{%
    \setbox0=\hbox{\kern-.1ex{$\leftrightarrow$}\kern-.1ex}
    \setbox1=\vbox{\hbox{\raise .1ex \box0}\hrule}%
    \ensuremath{\mathrel{\hbox{\kern.1ex\box1\kern.1ex}^{\Delta_0}_{rb}}}
  }
  \newcommand{\N}{\ensuremath{\mathalpha{\omega}}}
  \newcommand{\may}[1]{\langle #1 \rangle}  
  \newcommand{\must}[1]{[ #1 ]}
  \newcommand{\jb}[1]{\mathbin{#1}}
  \newcommand{\Div}{\mathalpha{\Delta}}
\title{Divergence-Preserving Branching Bisimilarity}
\author{Bas Luttik
\institute{Eindhoven University of Technology\\ The Netherlands}
\email{s.p.luttik@tue.nl}
}
\begin{document}
\maketitle

\begin{abstract}
  This note considers the notion of divergence-preserving branching
  bisimilarity. It briefly surveys results pertaining to the notion
  that have been obtained in the past one-and-a-half decade, discusses
  its role in the study of expressiveness of process calculi, and
  concludes with some suggestions for future work.
\end{abstract}

\section{Introduction}

\emph{Branching bisimilarity} was proposed by van Glabbeek and
Weijland as an upgrade of (strong) bisimilarity that facilitates
abstraction from internal activity \cite{GW96}. It preserves the
branching structure of processes more strictly than Milner's
\emph{observation equivalence} \cite{Mil80}, which, according to van
Glabbeek and Weijland, makes it, e.g., better suited for verification purposes. A case
in point is the argument by Graf and Sifakis that there is no
temporal logic with an \emph{eventually} operator that is adequate for
observation equivalence in the sense that two processes satisfy the
same formulas if, and only if, they are observationally equivalent
\cite{GS87}.  The crux is that observation
equivalence insufficiently takes into account the intermediate states
of an internal computation. Indeed, branching bisimilarity requires a
stronger correspondence between the intermediate states of an internal
computation.

Branching bisimilarity is also not compatible with a
temporal logic that includes an eventually operator, because it
abstracts to some extent from \emph{divergence} (i.e., infinite
internal computations). Thus, a further upgrade is necessary, removing
the abstraction from divergence. De Nicola and Vaandrager show that
\emph{divergence-sensitive} branching bisimilarity coincides with the equivalence
induced by satisfaction of formulas of the temporal logic
CTL$^{*}_{-X}$ \cite{DV95}. (CTL$^{*}$ \cite{EH86} is an expressive
state-based logic that includes both linear time and branching time
modalities; CTL$^{*}_{-X}$ refers to the variant of CTL$^{*}$ obtained
by omitting the next-state modality, which is incompatible with
abstraction from internal activity.)

Divergence-sensitive branching bisimilarity still has one drawback when it
comes to verification: it identifies deadlock and livelock and, as an
immediate consequence, is not compatible with parallel composition. It
turns out that the notion of \emph{divergence-preserving} branching
  bisimilarity\footnote{For stylistic reasons we prefer the term
  ``divergence-preserving branching bisimilarity'' over ``branching
  bisimilarity with explicit divergence'', which is used in earlier
  articles on the topic.}, which is the topic of this note, has all the right
properties: it is the coarsest equivalence that is compatible with parallel
composition, preserves CTL$^{*}_{-X}$ formulas, and distinguishes
deadlock and livelock \cite{GLT09b}. Moreover, on finite processes
divergence-preserving branching bisimilarity can be decided
efficiently \cite{GJKW17}.

In \cite{GW96}, a coloured-trace characterisation of
divergence-preserving branching bisimilarity is provided.  In
\cite{Gla93a}, relational and modal characterisations of the notion
are given. For some time it was simply assumed that these three
characterisations of the notion coincide, but this was only proved in
\cite{GLT09a}. To establish that the relational characterisation
coincides with the coloured-trace and modal characterisations, it
needs to be proved that the relational characterisation yields an
equivalence relation that satisfies the so-called stuttering property,
and this is surprisingly involved. A similar phenomenon
is observed in the proof that a rooted version
of divergence-preserving branching bisimilarity is compatible with the
recursion construct $\mu X.\_$ \cite{GLS20}. Due to the divergence condition,
Milner's ingenious argument in \cite{Mil89h} that strong bisimilarity
is compatible with recursion required several novel twists.

In this note we shall give a survey of results pertaining to
divergence-preserving branching bisimilarity that
were obtained in the past one-and-a-half decade. in
Section~\ref{sec:relchar} we shall present and discuss a relational
characterisation of the notion. In Section~\ref{sec:modchar} we
comment on modal characterisations of the notion, and discuss the
relationship with the temporal logic CTL$^{*}_{-X}$. In
Section~\ref{sec:congruence} we briefly discuss to what extent the
notion is compatible with familiar process algebraic operators. In
Section~\ref{sec:applications}, we explain how it plays a role in
expressiveness results. In Section~\ref{sec:conclusions} we arrive at some
conclusions and mention some ideas for future work.

\section{Relational characterisation} \label{sec:relchar}

  We presuppose a set
  $\Acts$ of \emph{actions} including a special element
  $\silent$, and we presuppose a \emph{labelled transition
    system} $(\States,\stepsym)$ with labels from $\Acts$, i.e.,
  $\States$ is a set of \emph{states} and
  $\stepsym\subseteq\States\times\Acts\times\States$ is a
  \emph{transition relation} on $\States$.
  Let $\states,\state{s'}\in\States$ and $\act\in\Acts$; we write
  $\states\step{\act}\state{s'}$ for
  $(\states,\act,\state{s'})\in\stepsym$ and we abbreviate the
  statement `$\states\step{\act}\state{s'}$ or ($\act=\silent$ and
  $\states=\state{s'}$)' by
  \plat{$\states\step{\opt{\act}}\state{s'}$}.  We denote by
  $\plusstepssym$ the transitive closure of the binary relation
  $\step{\silent}$, and by $\sstepssym$ its reflexive-transitive
  closure.
  A \emph{process} is given by a state $s$ in a labelled transition
  system, and encompasses all the states and transitions reachable
  from $s$.
  
 \begin{defn} \label{def:bbisimd}
  A symmetric binary relation $\brelsym$ on $\States$ is a
  \emph{branching bisimulation} if it satisfies the following
  condition for all $\states,\statet\in\States$ and $\act{}\in\Acts$:
  \begin{enumerate}\itemsep 0pt
  \renewcommand{\theenumi}{T}
  \renewcommand{\labelenumi}{(\theenumi)}
  \item \label{cnd:stepsim}
    if $\states\brel\statet$ and $\states\step{\act}\states'$ for
    some state $\states'$, then there exist states $\statet'$ and $\statet''$
    such that
      \plat{$\statet\ssteps{}\statet''\step{\opt{\act}}\statet'$},
      $\states\brel\statet''$
    and
      $\states'\brel\statet'$.
  \end{enumerate}
  We say that a branching bisimulation $\brelsym$ \emph{preserves
    (internal) divergence}
  if it satisfies the following condition for all $\states,\statet\in\States$:
  \begin{enumerate}
  \renewcommand{\labelenumi}{(\theenumi)}
  \renewcommand{\theenumi}{D}
  \item \label{cnd:divsim}
     if $\states\brel\statet$ and there is an infinite sequence of
     states
         $(\states[k])_{k\in\N}$
       such that
          $\states=\states[0]$,
          $\states[k]\step{\silent}\states[k+1]$
       and
          $\states[k]\brel\statet$ for all $k\in\N$,
      then there is a state $\statet'$
       such that
          $\statet\plussteps\statet'$,
       and
          $\states[k]\brel\statet'$ for some $k\in\N$.
  \end{enumerate}
  States $\states$ and $\statet$ are \emph{divergence-preserving
    branching bisimilar} (notation: $\states\bbisimd\statet$) if there is a
  divergence-preserving branching bisimulation $\brelsym$ such that
  $\states\brel\statet$. 
\end{defn}
The divergence condition \eqref{cnd:divsim} in the definition above is
slightly weaker than the divergence condition used in the relation
characterisation of divergence-preserving branching bisimilarity
presented in \cite{Gla93a}, which actually requires that $\statet$
admits an infinite sequence of $\silent$-transitions and every state
on this sequence is related to some state on the infinite sequence of
$\silent$-transitions from $\states$. Nevertheless, as is established in
\cite{GLT09a}, the notion of divergence-preserving branching
bisimilarity defined here is equivalent to the one defined in
\cite{Gla93a}. In \cite{GLT09a} it is also proved that $\bbisimd$ is
an equivalence, that the relation $\bbisimd$ is itself a
divergence-preserving branching bisimulation, and that it satisfies
the so-called \emph{stuttering property}: if
$\statet[0]\step{\silent}\cdots\statet[n]$,
$\states\bbisimd\statet[0]$ and $\states\bbisimd\statet[n]$, then
$\states\bbisimd\statet[i]$ for all $0\leq i \leq n$.

Let us say that a state $\states$ is \emph{divergent} if there exists an infinite
sequence of states $(\states[k])_{k\in\N}$ such that
$\states=\states[0]$ and $\states[k]\step{\silent}\states[k+1]$ for
all $k\in\N$. It is a straightforward consequence of the
definition that divergence-preserving branching bisimilarity relates
divergent states to divergent states only, i.e., that we have the following proposition.
\begin{prop}
  If $\states\bbisimd\statet$, then $\states$ is divergent only if $\statet$ is divergent.
\end{prop}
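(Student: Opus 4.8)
The plan is to argue directly with a witnessing relation. Assume $\states\bbisimd\statet$, witnessed by a divergence-preserving branching bisimulation $\brelsym$ with $\states\brel\statet$, and assume $\states$ is divergent. Call a state $u$ a \emph{pivot} if $x\brel u$ for some divergent state $x$. It then suffices to establish two facts: (i) $\statet$ is a pivot, which is immediate since $\states\brel\statet$ and $\states$ is divergent; and (ii) every pivot $u$ admits a pivot $u'$ with $u\plussteps u'$. Indeed, granting (ii) one applies it repeatedly starting from $\statet$, obtaining states $u_0,u_1,u_2,\ldots$ with $u_0=\statet$ and $u_i\plussteps u_{i+1}$ for every $i$; concatenating these nonempty $\silent$-paths yields an infinite $\silent$-sequence from $\statet$, so $\statet$ is divergent.

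The substance lies in (ii). Let $u$ be a pivot, witnessed by a divergent state $x_0$ with witnessing $\silent$-sequence $x_0\step{\silent}x_1\step{\silent}\cdots$ and $x_0\brel u$. I would use condition~\eqref{cnd:stepsim} to ``carry $u$ along'' this divergent sequence: assuming inductively that $x_k\brel u$, an application of~\eqref{cnd:stepsim} to $x_k\brel u$ and $x_k\step{\silent}x_{k+1}$ yields $u\ssteps{}u''\step{\opt{\silent}}u'$ with $x_k\brel u''$ and $x_{k+1}\brel u'$. If this matching sequence contains at least one transition, then $u\plussteps u'$, and $u'$ is a pivot witnessed by the divergent tail $x_{k+1}\step{\silent}x_{k+2}\step{\silent}\cdots$, so we are done; otherwise $u''=u'=u$, hence $x_{k+1}\brel u$ and the induction continues. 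If it never terminates, then $x_k\brel u$ for every $k\in\N$, so the sequence $(x_k)_{k\in\N}$, together with $x_0\brel u$, satisfies the hypothesis of condition~\eqref{cnd:divsim}; that condition then supplies a state $u'$ with $u\plussteps u'$ and $x_m\brel u'$ for some $m$, and $u'$ is a pivot because $x_m\step{\silent}x_{m+1}\step{\silent}\cdots$ is again divergent. Either way we obtain a pivot $u'$ with $u\plussteps u'$, as required.

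The one point requiring care---and the reason the proposition is not entirely trivial---is that the divergent $\silent$-sequence from $\states$ need not stay among the states that are $\brelsym$-related to $\statet$, so condition~\eqref{cnd:divsim} cannot be invoked on it directly. The role of condition~\eqref{cnd:stepsim} in step (ii) is precisely to propagate the relation along the divergent sequence one $\silent$-step at a time, until either the $\statet$-side is forced to make a genuine $\silent$-transition or the entire sequence is seen to remain within a single $\brelsym$-class, at which point~\eqref{cnd:divsim} applies. What remains is a routine induction together with the observation that an infinite concatenation of nonempty $\silent$-paths is itself an infinite $\silent$-sequence.
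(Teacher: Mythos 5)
Your proof is correct and rests on the same key idea as the paper's own argument: use condition~\eqref{cnd:stepsim} to push the relation along the divergent sequence from $\states$ until either the matching side is forced to perform a genuine $\silent$-step or the entire tail stays related to a single state, at which point condition~\eqref{cnd:divsim} becomes applicable. The differences are only organizational: the paper builds the divergence from $\statet$ by one simultaneous induction, carrying an index map $\sigma$ into the divergent sequence from $\states$ and working with $\bbisimd$ itself (which requires knowing that $\bbisimd$ is a divergence-preserving branching bisimulation), whereas you factor the argument through a ``pivot'' lemma iterated from $\statet$ and work with an arbitrary witnessing relation, which makes your version marginally more self-contained.
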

\begin{proof}
  Suppose that $\states\bbisimd\statet$ and $\states$ is divergent.
  Then there exists an infinite sequence of states $(\states[k])_{k\in\N}$ such that
$\states=\states[0]$ and $\states[k]\step{\silent}\states[k+1]$ for
all $k\in\N$. We inductively construct an infinite sequence of states
$(\statet[\ell])_{\ell\in\N}$ such that $\statet=\statet[0]$,
$\statet[\ell]\step{\silent}\statet[\ell+1]$, together with a mapping
$\sigma:\N\rightarrow\N$ such that
$\states[\sigma(\ell)]\bbisimd\statet[\ell]$ for all $\ell\in\N$;
\begin{itemize}
\item We define $\statet[0]=\statet$ and $\sigma(0)=0$; note that $\states[\sigma(0)]=\states\bbisimd\statet=\statet[0]$.
\item Suppose that the sequence $(\statet[\ell])_{\ell\in\N}$ and the
  mapping $\sigma$ have been defined up to $\ell$. Then, in
  particular, $\states[\sigma(\ell)]\bbisimd\statet[\ell]$.
  We distinguish two cases:
  
  If $\states[\sigma(\ell)+k]\bbisimd\statet[\ell]$ for
  all $k\in\N$, then by \eqref{cnd:divsim} there exists
  $\statet[\ell+1]$ such that
  $\statet[\ell]\step{\silent}\statet[\ell+1]$ and
  $\states[\sigma(\ell)+k]\bbisimd\statet[\ell+1]$ for some $k\in\N$;
  we can then define $\sigma(\ell+1)=k$.

  Otherwise, there exists some $k\in\N$ such that
  $\states[\sigma(\ell)+k]\bbisimd\statet[\ell]$ and
  $\states[\sigma(\ell)+k+1]\not\bbisimd\statet[\ell]$. Since
  $\states[\sigma(\ell)+k]\step{\silent}\states[\sigma(\ell)+k+1]$ it
  follows by \eqref{cnd:stepsim} that there exist $\statet[\ell]''$
  and $\statet[\ell+1]$ such that
  $\statet[\ell]\ssteps\statet[\ell]''\step{\opt{\silent}}\statet[\ell+1]$,
  $\states[\sigma(\ell)+k]\bbisimd\statet[\ell]''$ and
  $\states[\sigma(\ell)+k+1]\bbisimd\statet[\ell+1]$. Clearly, we have that
  $\statet[\ell]\neq\statet[\ell+1]$, so
  $\statet[\ell]\plussteps\statet[\ell+1]$ and we can define $\sigma(\ell+1)=\sigma(\ell)+k+1$.
\end{itemize}
From the existence of an infinite sequence of states
$(\statet[\ell])_{\ell\in\N}$ such that $\statet=\statet[0]$ and
$\statet[\ell]\step{\silent}\statet[\ell+1]$ it follows that $\statet$
is divergent, as was to be shown.
\end{proof}

As the following example illustrates, however, a symmetric binary relation on $\States$ relating
states that satisfies \eqref{cnd:stepsim} of Definition~\ref{def:bbisimd} and relates divergent states to divergent
states only is \emph{not necessarily} included in a divergence-preserving
branching bisimulation relation. In other words, a symmetric binary
relation on $\States$ that satisfies \eqref{cnd:stepsim} and only
relates divergent states to divergent states may relate states that
are not divergence-preserving branching bisimilar.

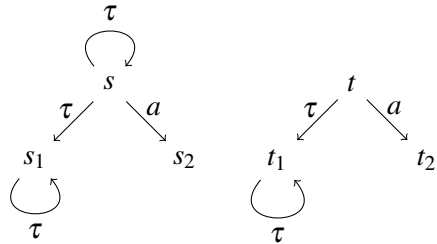
\begin{figure}[htb]
    \begin{center}
      \begin{tikzpicture}[node distance=40pt]
        \node (s) {$\states$};
        \node [below left of=s] (s1) {$\states[1]$};
        \node [below right of=s] (s2) {$\states[2]$};
        \node [right=80pt of s] (t) {$\statet$};
        \node [below left of=t] (t1) {$\statet[1]$};
        \node [below right of=t] (t2) {$\statet[2]$};
   
        \path[->] (s) edge[loop, min distance=.5cm, in=50,out=130,looseness=5] node[above] {$\silent$} (s);
        \path[->] (s) edge node[near start,left] {$\silent$} (s1);
        \path[->] (s1) edge[loop, min distance=.5cm, out=230,in=310,looseness=5] node[below] {$\silent$} (s1);
        \path[->] (s) edge node[near start,right] {$\acta$} (s2);
        
        \path[->] (t) edge node[near start,left] {$\silent$} (t1);
        \path[->] (t1) edge[loop, min distance=.5cm, out=230,in=310,looseness=5] node[below] {$\silent$} (t1);
        \path[->] (t) edge node[near start,right] {$\acta$} (t2);
        
      \end{tikzpicture}
    \end{center}
\caption{An example transition system illustrating that \eqref{cnd:divsim} cannot be replaced by the requirement that $\brel$
  relates divergent states to divergent states.}\label{fig:insufficientdc}
\end{figure}
\begin{exmp}
  Consider the transition system depicted in
  Figure~\ref{fig:insufficientdc}.
  The symmetric closure of the relation
  $\brelsym=\{(\states,\statet),(\states[1],\statet[2]),(\states[2],\statet[2])\}$ satisfies
  \eqref{cnd:stepsim} and it relates divergent states to
  divergent states only. It does not, however, satisfy
  \eqref{cnd:divsim}, for $s\brel t$ and defining $\states[k]=s$ for
  all $k\in\N$ we get an infinite sequence of states
  $(\states[k])_{k\in\N}$ such that
  $\states[k]\step{\silent}\states[k+1]$ and $\states[k]\brel\statet$
  for all $k\in\N$, while there does not exist a $\statet'$ such that
  $\statet\plussteps\statet'$ and $\states[k]\brel\statet'$ for some
  $k\in\N$.
  Note that $s$ admits a complete path at which $\acta$ is
  continuously (weakly) enabled, whereas $t$ does not admit such a complete path.
\end{exmp}

\section{Modal characterisations} \label{sec:modchar}

As shown in \cite{Gla93a}, to get an (action-based) modal logic that
is adequate for branching bisimilarity one could take an adaptation of
standard Hennessy-Milner logic replacing, for all actions
$\act\in\Acts$ in the usual unary may and must modalities
$\may{\acta}$ and $\must{\acta}$ by a binary \emph{just-before}
modality $\jb{\acta}$. A state $\states$ satisfies the formula
$\varphi\jb{\acta}\psi$ if, and only if, there exist states
$\states''$ and $\states'$ such that
$\states\ssteps\states''\step{\opt{\act}}\states'$, $\varphi$ holds in
$\states''$ and $\psi$ holds in $\states'$. To get an adequate logic
for divergence-preserving branching bisimilarity, it suffices to add a
unary \emph{divergence modality} $\Div{}$ such that $\states$
satisfies $\Div{\varphi}$ if, and only if, there exists an infinite
sequence of states $(\states[k])_{k\in\N}$ such that
$\states\ssteps\states[0]$, $\states[k]\step{\silent}\states[k+1]$ and
$\varphi$ holds in $\states[k]$ for all $k\in\N$.

Let $\Phi$ be the class of formulas generated by the following grammar:
  \begin{equation*}
    \varphi ::=
      \neg\varphi\ \mid\
      \bigwedge\Phi'\ \mid\
      \varphi\jb{\act}\varphi\ \mid\
      \Div\varphi\qquad (\act\in\Acts,\ \varphi\in\Phi,\
      \Phi'\subseteq\Phi)
  \enskip.
\end{equation*}
We then have that states $\states$ and $\statet$ are divergence-proving branching bisimilar if, and only if, $\states$ and
$\statet$ satisfy exactly the same formula in $\Phi$ \cite{GLT09a}. We
may restrict the cardinality of $\Phi'$ in conjunctions to the
cardinality of the set of states $\States$.
\begin{exmp}
  Consider again the transition system depicted in
  Figure~\ref{fig:insufficientdc}. States $\states$ and $\statet$ are
  not divergence-preserving branching bisimilar.
  The formula
  $\Div\left(\top\jb{\acta}\top\right)$ (in which $\top$ abbreviates $\bigwedge\emptyset$)
  expresses the existence of a divergence on which the action
  $\acta$ is continuously enabled. It is satisfied by state $\states$,
  but not by $\statet$.
\end{exmp}

There is also an intuitive correspondence between branching
bisimilarity and the state-based temporal logic CTL$^{*}_{-X}$
(CTL$^{*}$ without the next-state modality) \cite{EC82}. The standard
semantics of CTL$^{*}_{-X}$ is, however, with respect to Kripke structures, in
which states rather than transitions have labels and the transition
relation is assumed to be total. 
To formalise the correspondence, De Nicola and Vaandrager
devised a framework of translations between labelled transition systems
and Kripke structures \cite{DV95}. The main idea of the translation from labelled
transition systems to Kripke structures is that
\begin{enumerate}
  \item every transition $\states\step{\acta}\statet$
    ($\acta\neq\silent$) is replaced by two transitions
    $\states\step{}\statet_{\acta}$ and
    $\statet_{\acta}\step{}\statet$, where $t_{\acta}$ is a fresh
    state that is labelled with $\{\acta\}$;
  \item every transition $\states\step{\silent}\statet$ gives rise to
    a transition $\states\step{}\statet$; and
  \item\label{item:totalise} for every state $\states$ without outgoing transitions (i.e.,
    every deadlock state of the labelled transition system) a
    transition $\states\step{}\states$ is added to satisfy the
    totality requirement of Kripke structures.
  \end{enumerate}

  \begin{figure}[htb]
     \begin{center}
      \begin{tikzpicture}[node distance=40pt]
        \node (s) {$\states$};
        \node [below left of=s] (s1) {$\states[1]$};
        \node [below right of=s] (s2) {$\states[2,\acta]$};
        \node [below of=s2] (s2a) {$\states[2]$};
        \node [right=80pt of s] (t) {$\statet$};
        \node [below left of=t] (t1) {$\statet[1]$};
        \node [below right of=t] (t2) {$\statet[2,\acta]$};
        \node [below of=t2] (t2a) {$\statet[2]$};
        \node [above right of=s2, node distance=5mm] (s2lab) {$\{\acta\}$};
        \node [above right of=t2, node distance=5mm] (t2lab) {$\{\acta\}$};
   
        \path[->] (s) edge[loop, min distance=.5cm, in=50,out=130,looseness=5] (s);
        \path[->] (s) edge (s1);
        \path[->] (s1) edge[loop, min distance=.5cm, out=230,in=310,looseness=5] (s1);
        \path[->] (s) edge (s2);
        \path[->] (s2) edge (s2a);
        \path[->] (s2a) edge[loop, min distance=.5cm, out=230,in=310,looseness=5] (s2a);
       
        \path[->] (t) edge (t1);
        \path[->] (t1) edge[loop, min distance=.5cm, out=230,in=310,looseness=5] (t1);
        \path[->] (t) edge (t2);
        \path[->] (t2) edge (t2a);
        \path[->] (t2a) edge[loop, min distance=.5cm, out=230,in=310,looseness=5] (t2a);
        
      \end{tikzpicture}
    \end{center}
\caption{Result of apply De Nicola and Vaandrager's translation to the
  labelled transition system in Figure~\ref{fig:insufficientdc}.}\label{fig:KSDNV}
 \end{figure}
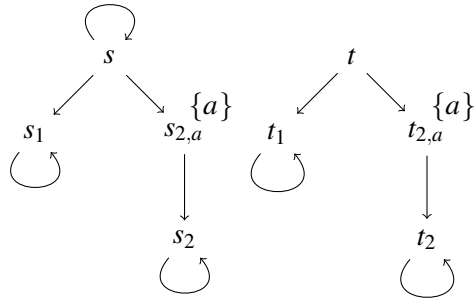
  
  \begin{exmp}
     If we apply the translation sketched above to the labelled
     transition system depicted in Figure~\ref{fig:insufficientdc}, then we
     get the Kripke structure depicted in Figure~\ref{fig:KSDNV}. Note
     that by clause~\ref{item:totalise} of the translation state
     $\states[2]$ gets a transition to itself, whereas it is a
     deadlock state in the orginal transition system.
     Clearly, there is no CTL$_{-X}$ formula that distinguishes, e.g.,
     between $s_1$ and $s_2$, although in the labelled transition
     system depicted in Figure~\ref{fig:insufficientdc} these states
     are not divergence-preserving branching bisimilar.
   \end{exmp}
   
De Nicola and Vaandrager propose a notion of \emph{divergence-sensitive
  branching bisimilarity} on finite LTSs and establish that two states
in an LTS are divergence-sensitive branching bisimilar if, and only
if, in the Kripke resulting from the translation sketched above they
satisfy the same CTL$^{*}_{-X}$ formulas. Divergence-sensitive branching
bisimilarity coincides with divergence-preserving branching
bisimilarity on deadlock-free LTSs. In fact, the only difference
between divergence-sensitive branching bisimilarity and
divergence-preserving branching bisimilarity is that the latter
distinguishes between deadlock and livelock states, whereas the former
does not.

To preserve the distinction between deadlock and livelock, a modified translation is proposed in \cite{GLT09b}, obtained from
the translation sketched above by replacing clause~\ref{item:totalise}
by
\begin{enumerate}
  \addtocounter{enumi}{2}
  \renewcommand{\labelenumi}{\theenumi$'$.}
  \item add a fresh state $d$ labelled with $\{\delta\}$, and
    for every state $\states$ without outgoing transitions a transition $\states\step{}d$.
\end{enumerate}

 \begin{figure}[htb]
     \begin{center}
      \begin{tikzpicture}[node distance=40pt]
        \node (s) {$\states$};
        \node [below left of=s] (s1) {$\states[1]$};
        \node [below right of=s] (s2) {$\states[2,\acta]$};
        \node [below of=s2] (s2a) {$\states[2]$};
        \node [right=80pt of s] (t) {$\statet$};
        \node [below left of=t] (t1) {$\statet[1]$};
        \node [below right of=t] (t2) {$\statet[2,\acta]$};
        \node [below of=t2] (t2a) {$\statet[2]$};
        \node [above right of=s2, node distance=5mm] (s2lab) {$\{\acta\}$};
        \node [above right of=t2, node distance=5mm] (t2lab) {$\{\acta\}$};
        \node [below of=s2a, node distance=20pt] (s2adown){};
        \node [right=36pt of s2adown] (d) {$d$};
         \node [above of=d, node distance=5mm] (dlab) {$\{\delta\}$};
  
        \path[->] (s) edge[loop, min distance=.5cm, in=50,out=130,looseness=5] (s);
        \path[->] (s) edge (s1);
        \path[->] (s1) edge[loop, min distance=.5cm, out=230,in=310,looseness=5] (s1);
        \path[->] (s) edge (s2);
        \path[->] (s2) edge (s2a);
        \path[->] (d) edge[loop, min distance=.5cm,
        out=230,in=310,looseness=5] (d);
        \path[->] (s2a) edge (d);
        \path[->] (t2a) edge (d);
       
        \path[->] (t) edge (t1);
        \path[->] (t1) edge[loop, min distance=.5cm, out=230,in=310,looseness=5] (t1);
        \path[->] (t) edge (t2);
        \path[->] (t2) edge (t2a);
        
      \end{tikzpicture}
    \end{center}
\caption{Result of apply the deadlock preserving translation to the
  labelled transition system in
  Figure~\ref{fig:insufficientdc}.}\label{fig:KSDNVDT}
 \end{figure}
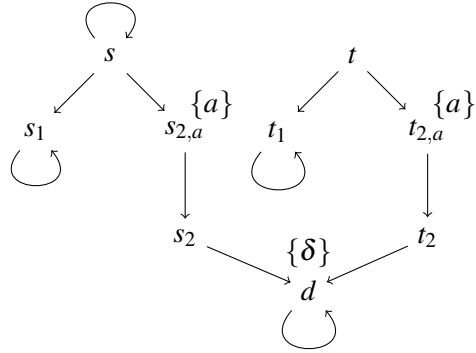

  \begin{exmp}
     Applying the modified translation on the labelled transition
     in Figure~\ref{fig:insufficientdc}, we 
     get the Kripke structure in Figure~\ref{fig:KSDNVDT}.
     Note that $\states[1]$ does not satisfy the CTL$^{*}_{-X}$ formula
     $\mathsf{EF}\, \delta$, while $\states[2]$ does.
   \end{exmp}

Two states in a labelled transition system are divergence-preserving
branching bisimilar if they satisfy the same CTL$^{*}_{-X}$ formulas in
the Kripke structure that results from the modified transition
\cite{GLT09b}.

\section{Congruence} \label{sec:congruence}

An important reason to prefer divergence-preserving branching
bisimilarity over divergence-sensitive branching bisimilarity is that
the former is compatible with parallel composition, whereas the latter
is not.

\begin{figure}[htb]
  \begin{center}
    \begin{tikzpicture}[node distance=20pt]
      \node (s1t) {$\states[1]\mathbin{\|}\statet$};
      \node [below of=s1t, node distance=40pt] (s2t) {$\states[2]\mathbin{\|}\statet$};
      \node [right of=s1t,anchor=west] (E1) {};
      \node [right of=E1] (t) {$\statet$};
      \node [right of=t] (P1) {};
      \node [right of=P1](s1) {$\states[1]$};
      \node [below of=s1, node distance=40pt] (s2) {$\states[2]$};
      \node [right of=s1] (P2) {};
      \node [right of=P2] (t') {$\statet'$};
      \node [right of=t'] (E2) {};
      \node [right of=E2, anchor=west] (s1t') {$\states[1]\mathbin{\|}\statet'$};
      \node [below of=s1t',node distance=40pt] (s2t') {$\states[2]\mathbin{\|}\statet'$};

      
      \path[->] (t') edge[loop, min distance=.5cm,out=230,in=310,looseness=5] node[below]{$\silent$} (t');
      \path[->] (s1t) edge node[left] {$\acta$} (s2t);
      \path[->] (s1) edge node[left] {$\acta$} (s2);
      \path[->] (s1t') edge node[left] {$\acta$} (s2t');
      \path[->] (s1t') edge[loop, min distance=.5cm,out=20,in=340,looseness=5] node[right]{$\silent$} (s1t');
      \path[->] (s2t') edge[loop, min distance=.5cm,out=20,in=340,looseness=5] node[right]{$\silent$} (s2t');
    \end{tikzpicture}
 \end{center}\caption{Divergence-sensitive branching bisimilarity is
   not compatible with parallel composition.} \label{fig:parcompat}
\end{figure}
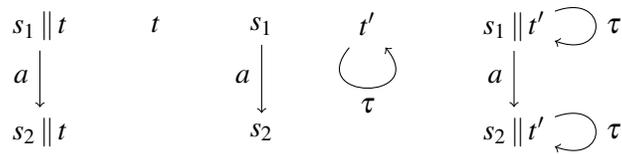

\begin{exmp}
  Consider the transition system in Figure~\ref{fig:parcompat}.
  States
    $\states[1]\mathbin{\|}\statet$ and
    $\states[2]\mathbin{\|}\statet$
  represent the parallel compositions of states
  $\states[1]$ and $\statet$, and of states $\states[2]$ and $\statet$, respectively.
  Similarly, states
    $\states[1]\mathbin{\|}\statet'$ and
    $\states[2]\mathbin{\|}\statet'$
  represent the parallel compositions of states $\states[1]$ and
  $\statet'$, and of states $\states[2]$ and $\statet'$, respectively.
  Recall that divergence-sensitive branching bisimilarity does not
  distinguish deadlock (state $\statet$) and livelock (state
  $\statet'$), so we have that $\statet$ and $\statet'$ are
  divergence-sensitive branching bisimilar. States
  $\states[1]\mathbin{\|}\statet$ and $\states[1]\mathbin{\|}\statet'$
  are, however, not divergence-sensitive branching bisimilar. Note that
  $\states[1]\mathbin{\|}\statet'$ has a complete path on which
  $\acta$ is continuously enabled, whereas
  $\states[1]\mathbin{\|}\statet$ does not have such a complete path,
  and so these two states do not satisfy the same CTL$^{*}_{-X}$ formulas.
\end{exmp}

Divergence-preserving branching bisimilarity is the coarsest
equivalence included in divergence-sensitive branching bisimilarity
that is compatible with parallel composition \cite{GLT09b}. Hence, it is also the
coarsest congruence for parallel composition relating only processes
that satisfy the same CTL$^{*}_{-X}$ formulas.

It is well-known that branching bisimilarity is not compatible with
non-deterministic choice, and that the coarsest behavioural
equivalence that is included in branching bisimilarity and that is
compatible with non-deterministic choice, is obtained by adding a
so-called \emph{root condition}. The same holds for divergence-preserving
branching bisimilarity.
\begin{defn}
  Let $\brel$ be a divergence-preserving branching bisimulation. We
  say that $\brel$ satisfies the \emph{root condition} for $\states$
  and $\statet$ if, whenever
 \begin{enumerate}\itemsep 0pt
  \renewcommand{\labelenumi}{(R\theenumi)}
  \item \label{cnd:root1}
    if $\states\step{\act}\states'$ for some state $\states'$, then there exists a state $\statet'$
    such that
      \plat{$\statet\step{\act}\statet'$}
    and
    $\states'\brel\statet'$.
  \item \label{cnd:root2}
    if $\statet\step{\act}\statet'$ for some state $\statet'$, then there exists a state $\states'$
    such that
      \plat{$\states\step{\act}\states'$}
    and
    $\states'\brel\statet'$.
  \end{enumerate}
  States $\states$ and $\statet$ are \emph{rooted
    divergence-preserving branching bisimilar} if there is a
  divergence-preserving branching bisimulation relation $\brel$
  satisfying the root condition for $\states$ and $\statet$ such that
  that $\states\brel\statet$.
\end{defn}

In \cite{FvGL2019}, formats for transition system
specifications are presented that guarantee that divergence-preserving
branching bisimilarity and its rooted variant are compatible with the
operators defined by the transition system specification. These
formats relax the requirements of the branching bisimulation and
rooted branching bisimulation formats of \cite{FGW12}. The relaxation
of the formats is meaningful: the process-algebraic operations for
\emph{priority} \cite{BBK86} and \emph{sequencing}
\cite{Blo94,BLB19,BLY17}, with which (rooted) branching bisimilarity
is \emph{not} compatible, are in the rooted divergence-preserving
branching bisimulation format. So, in contrast to its
divergence-insensitive variant, rooted divergence-preserving branching
bisimilarity is compatible with priority and sequencing.

The structural operational rule for the recursion
operator $\mu X.\_$, which was considered in the context of
observation equivalence by Milner \cite{Mil89} and in the context of
divergence-sensitive variants of observation equivalence by Lohrey,
D'Argenio and Hermanns \cite{LDH05}, is not in the format for rooted
divergence-preserving branching bisimilarity. Nevertheless, rooted
divergence preserving branching bisimilarity is compatible also with this
operator \cite{GLS20}. The proof of this fact requires an adaption of
the up-to technique used by Milner in his argument that (strong)
bisimilarity is compatible with recursion \cite{Mil89h}.

\section{Expressiveness of process calculi} \label{sec:applications}

Phillips showed that abstraction from divergence can be exploited to prove that every
recursively enumerable transition system is branching bisimilar to a
boundedly branching computable transition system
\cite{Phi93}\footnote{Phillips actually claimed the correspondence
  modulo observation equivalence, but it is easy to see that his proof
  also works modulo branching bisimilarity.}. In contrast, there exist
recursively enumerable transition systems that are not
divergence-preserving branching bisimilar to a computable transition
system (cf., e.g.,  Example 3.6 in \cite{BLT13}).
Hence, in a theory that aims to integrate computability and
concurrency, divergence preservation is important.

In \cite{BLT13}, interactivity is added to Turing machines by
associating an action with every computation step. This so-called
\emph{reactive} Turing machine has a transition system semantics and
can be studied from a concurrency-theoretic perspective. A transition
system is called \emph{executable} if it is behaviourally equivalent
to the transition system associated with a reactive Turing
machine. The notion of executability provides a way to characterise the
absolute expressiveness of a process calculus. If every transition
system that can be specified in the calculus is executable, then the
calculus is said to be executable. Conversely, if every executable
transition system can be specified in the calculus, then the calculus
is said to be behaviourally complete.

A calculus with
constants for deadlock and successful termination, unary action
prefixes, binary operations for non-deterministic choice, sequencing
and ACP-style parallel composition, iteration and nesting is both
executable and behaviourally complete up to divergence-preserving
branching bisimilarity \cite{BLY17}. The $\pi$-calculus is also
behaviourally complete up to divergence-preserving branching
bisimilarity. Since it allows the specification of transition systems
with unbounded branching, it is, however, not executable up to
divergence-preserving branching bisimilarity; it is nominally
orbit-finitely executable up to the divergence-insensitive variant of branching
bisimilarity \cite{LY20}.

The aforementioned results illustrate the role of divergence in the
consideration of the absolute expressiveness of process
calculi. Preservation of divergence is also widely accepted as an important
criterion when comparing the relative expressiveness of process
calculi \cite{Pet19}.

\section{Conclusions} \label{sec:conclusions}

We have discussed ther relational and modal characterisations of
divergence-preserving branching bisimilarity, commented on its
compatibility with respect to process algebraic operations and on its
role in the study of the absolute expressiveness. We conclude by briefly mentioning some directions for future work.

Sound and complete axiomatisations for the divergence-sensitive
spectrum of observation congruence for basic CCS with recursion are
provided in \cite{LDH05}. The congruence result in \cite{GLS20} can
serve as a stepping stone for providing similar sound and complete
axiomatisations for divergence-preserving branching
bisimilarity. Then, it would also be interesting to consider the
axiomatisation of divergence-preserving branching bisimilarity for
full CCS with recursion, although that would first require a
non-trivial extension of the congruence result.

Ad hoc up-to techniques for divergence-preserving branching
bisimilarity have already been used, e.g., in the congruence proof in
\cite{GLS20} and in proof that the $\pi$-calculus is behaviourally
complete \cite{LY20}. Recently, several more generic up-to
techniques for branching bisimilarity were proved sound
\cite{ERL20}. An interesting direction for future work would be to
consider extending those up-to techniques for divergence-preserving
branching bisimilarity too.

\bibliographystyle{eptcs}
\bibliography{dpbb}
\end{document}